\documentclass[conference]{IEEEtran}
\usepackage{graphicx}
\usepackage{amsmath}
\usepackage{amssymb}
\usepackage{amsthm}         % proof 环境
\newtheorem{lemma}{Lemma}   % lemma 环境
\usepackage{algorithm}      
\usepackage{algorithmicx}   
\usepackage{algpseudocode} 
\usepackage{booktabs} 
\usepackage{cite}
\usepackage{balance} 
\usepackage{multirow}
\usepackage[caption=false]{subfig}

\ifCLASSINFOpdf
\else
\fi
\begin{document}
%
% paper title
% Titles are generally capitalized except for words such as a, an, and, as,
% at, but, by, for, in, nor, of, on, or, the, to and up, which are usually
% not capitalized unless they are the first or last word of the title.
% Linebreaks \\ can be used within to get better formatting as desired.
% Do not put math or special symbols in the title.
\title{TGL-APT: Temporal Graph Learning with Graph Distillation for Efficient APT Investigation}

% author names and affiliations
% use a multiple column layout for up to three different
% affiliations
%\author{}
\author{\IEEEauthorblockN{Jing Chen}
	\IEEEauthorblockA{Fujian Normal University\\
		qbx20240213@yjs.fjnu.edu.cn}
	\and
	\IEEEauthorblockN{Ayong Ye\textsuperscript{*}} 
	\IEEEauthorblockA{
		Fujian Normal University\\ 
		yay@fjnu.edu.cn\\
		\textsuperscript{*}Corresponding author
	}
	\and
	\IEEEauthorblockN{Yuanhuang Liu}
	\IEEEauthorblockA{Fujian Normal University\\
		qsx20231394@student.fjnu.edu.cn}
	\and
	\IEEEauthorblockN{Yuexin Zhang}
	\IEEEauthorblockA{Fujian Normal University\\
		yxzhang@fjnu.edu.cn}}

% conference papers do not typically use \thanks and this command
% is locked out in conference mode. If really needed, such as for
% the acknowledgment of grants, issue a \IEEEoverridecommandlockouts
% after \documentclass

% for over three affiliations, or if they all won't fit within the width
% of the page, use this alternative format:
% 
%\author{\IEEEauthorblockN{Michael Shell\IEEEauthorrefmark{1},
		%Homer Simpson\IEEEauthorrefmark{2},
		%James Kirk\IEEEauthorrefmark{3}, 
		%Montgomery Scott\IEEEauthorrefmark{3} and
		%Eldon Tyrell\IEEEauthorrefmark{4}}
	%\IEEEauthorblockA{\IEEEauthorrefmark{1}School of Electrical and Computer Engineering\\
		%Georgia Institute of Technology,
		%Atlanta, Georgia 30332--0250\\ Email: see http://www.michaelshell.org/contact.html}
	%\IEEEauthorblockA{\IEEEauthorrefmark{2}Twentieth Century Fox, Springfield, USA\\
		%Email: homer@thesimpsons.com}
	%\IEEEauthorblockA{\IEEEauthorrefmark{3}Starfleet Academy, San Francisco, California 96678-2391\\
		%Telephone: (800) 555--1212, Fax: (888) 555--1212}
	%\IEEEauthorblockA{\IEEEauthorrefmark{4}Tyrell Inc., 123 Replicant Street, Los Angeles, California 90210--4321}}

% use for special paper notices
%\IEEEspecialpapernotice{(Invited Paper)}

% make the title area
\maketitle

% As a general rule, do not put math, special symbols or citations
% in the abstract
\begin{abstract}
	Advanced Persistent Threat (APT) attacks pose a critical challenge to modern systems, as their stealthy, multi-stage nature renders conventional detection methods ineffective. While provenance graphs provide rich behavioral context for attack investigation, attack-relevant evidence is often sparse and embedded in large volumes of routine system activity, making full-graph learning both computationally expensive and difficult to correlate over long attack sequences. We present TGL-APT, an adaptive investigation framework built on the observation that attack-relevant information is non-uniformly distributed and often mediated by structurally influential or behaviorally distinctive entities, which we characterize as information-bottleneck nodes. TGL-APT combines three complementary components: (1) information-bottleneck-guided graph distillation that suppresses provenance redundancy while bounding structural distortion and preserving causal reachability; (2) adaptive temporal graph learning that continuously refines the core node set as node relevance evolves; and (3) cross-spatiotemporal attack fingerprint alignment that associates fragmented suspicious activities across different entities and time windows. Finally, causal expansion and stage characterization reconstruct coherent attack processes for investigation. Experiments on three DARPA E3 datasets show F1-scores of 95.7\%, 90.9\%, and 88.9\%, while reducing training time, detection latency, and memory usage by approximately 39\%, 33\%, and 22\%, respectively, compared with KAIROS. These results demonstrate that TGL-APT effectively balances detection performance, computational efficiency, and investigation capability for provenance-based APT analysis.
\end{abstract}

% no keywords

% For peer review papers, you can put extra information on the cover
% page as needed:
% \ifCLASSOPTIONpeerreview
% \begin{center} \bfseries EDICS Category: 3-BBND \end{center}
% \fi
%
% For peerreview papers, this IEEEtran command inserts a page break and
% creates the second title. It will be ignored for other modes.
\IEEEpeerreviewmaketitle

\section{Introduction}
\label{sec:intro}

Advanced Persistent Threat (APT) attacks, with the characteristics of stealthiness, long-term latency, and multi-stage infiltration, have become one of the most serious threats to critical infrastructure and information systems \cite{Zipperle2022}. Traditional signature-based detection tools can hardly cope with the low-frequency, slow-changing, and highly concealed behaviors of APT attacks. System provenance graphs constructed from audit logs have become the mainstream paradigm for APT traceability due to their ability to record entity dependencies and behavioral processes \cite{li2021threat}.

Recent provenance-based approaches have substantially advanced APT detection by modeling structural and temporal dependencies in system activities. In particular, graph-learning-based methods learn representations of processes, files, and network interactions from provenance graphs, while temporal approaches further capture the evolution of system behavior for anomaly detection and attack investigation \cite{Magic2024,Flash2024,Kairos2024}. These methods demonstrate the effectiveness of provenance-based temporal modeling for detecting stealthy attacks that are difficult to identify from isolated events.

However, the increasing scale of provenance data creates a fundamental signal-to-noise dilemma for APT detection. In particular, provenance-based APT analysis exhibits a \emph{data paradox}: collecting more system activities provides richer behavioral context, yet attack-relevant evidence remains sparse, causing its effective signal-to-noise ratio to decrease as routine provenance accumulates. Long-running systems may generate millions of nodes and edges, while attack-related evidence is easily obscured by repetitive benign activities. Consequently, full-graph learning not only incurs substantial computational overhead \cite{Magic2024,Flash2024,Kairos2024}, but also processes large amounts of information with limited security relevance. Meanwhile, indiscriminate graph reduction may discard rare but critical entities or break causal dependencies \cite{Li2025TeRed,VGAE4PG2026}. These limitations raise a more fundamental question: what information in a provenance graph is actually necessary for preserving attack-relevant evidence?

Existing full-graph learning and generic graph reduction do not explicitly answer this question. In provenance graphs, attack evidence is often sparse and non-uniformly distributed, tending to concentrate around a limited set of structurally influential or behaviorally distinctive entities. Moreover, as an APT campaign evolves, the entities carrying such evidence may change over time, causing related activities to appear across different processes, files, network endpoints, and time windows. Such temporal relevance drift makes a statically distilled node set increasingly mismatched to the evolving attack context. This temporal dispersion makes entity co-occurrence insufficient for correlating related activities and may fragment a continuous attack into disconnected anomaly sequences \cite{Zhang2025TAPAS,wang2025pathwatcher,Kairos2024,R-CAID2024}. Even when suspicious events are detected, isolated anomalies alone cannot reveal their causal dependencies or progression across attack stages \cite{Dong2023,jiang2025orthrus}. Therefore, the key challenge is to understand where attack-relevant information is concentrated, how its importance evolves over time, and how distributed evidence can be connected into a coherent attack process.

To address this question, we formulate provenance distillation from an information-bottleneck perspective and propose TGL-APT, a temporal graph learning framework that progressively extracts, tracks, and associates attack-relevant information from large-scale provenance data. The framework first employs information-bottleneck-guided graph distillation to identify structurally and behaviorally informative entities and preserve their critical causal context while suppressing redundant system activities. It then uses an adaptive focus mechanism to dynamically refine the core node set according to model-derived signals, enabling the analysis focus to follow changes in information relevance over time. To recover attack evidence fragmented across different entities and time windows, cross-spatiotemporal fingerprint alignment combines entity-level and semantic correlations to connect related suspicious activities. Finally, causal expansion and stage characterization reconstruct the correlated evidence into coherent attack chains and identify attack stages, transforming fragmented anomalies into interpretable investigation results. The main contributions are summarized as follows:

(1) We formulate provenance-based APT investigation from an Information Bottleneck perspective and formally characterize IB nodes from information-theoretic, graph-topological, and temporal perspectives. We further establish a structural-causal preservation and downstream stability bound, providing theoretical support for preserving attack-relevant information during graph distillation.

(2) We propose TGL-APT, an efficient and adaptive framework that addresses the limitations of static graph reduction. Its Adaptive Focus Mechanism dynamically updates critical nodes using model-derived attention weights and embedding deviations, enabling the analysis focus to follow evolving attack-relevant information over time.

(3) To address the ``synonymous but heterogeneous'' nature of attack behaviors across different entities and time windows, we design a cross-spatiotemporal attack fingerprint alignment method. By combining TF-IDF-weighted fingerprints with contrastive learning, TGL-APT associates semantically related suspicious activities across temporal windows without requiring direct entity co-occurrence.

(4) We conduct extensive experiments on three DARPA E3 datasets. TGL-APT achieves F1-scores of up to 95.7\% while reducing training time, detection latency, and memory usage by approximately 39\%, 33\%, and 22\%, respectively, compared with KAIROS. The investigation results further demonstrate its capability to reconstruct fragmented suspicious activities into interpretable attack stages.

\section{Background and Related Work}
\label{sec:relwork}

\subsection{Research Status}
Existing studies on APT detection and investigation mainly rely on system logs and provenance graphs \cite{TATAM2021e05969,ALAGEEL2026112069}. According to their main objectives, we group existing approaches into three categories: provenance-based APT detection, graph learning and provenance reduction, and attack correlation and investigation.

\subsubsection{Provenance-based APT Detection}
Early provenance-based detection methods mainly rely on expert knowledge, predefined
rules, or statistical deviations from normal system behavior. Heuristic-based approaches
\cite{Holmes2019,Poirot2019,SLEUTH2017,ProvDetector2020,LMTracker2022,Chen2022APT-KGL}
identify suspicious activities using known attack patterns, provenance dependencies, or threat knowledge. Holmes \cite{Holmes2019} maps low-level system events to expert-defined TTPs and constructs high-level attack scenarios, while SLEUTH \cite{SLEUTH2017} reconstructs attack scenarios from audit data. Poirot \cite{Poirot2019} aligns known attack behavior with kernel audit records for threat hunting, and LMTracker \cite{LMTracker2022} models lateral movement paths using heterogeneous graph representations. APT-KGL \cite{Chen2022APT-KGL} further combines threat knowledge with heterogeneous provenance graph learning. Commercial detection systems also commonly organize attack knowledge using MITRE ATT\&CK TTPs \cite{MITRE2022}. APT-MMF \cite{APT-MMF2024} exploits multimodal and multilevel information for APT actor attribution.

Anomaly-based approaches reduce the dependence on predefined attack signatures by
learning normal system behavior \cite{Streamspot2016,Unicorn2020,ThreaTrace2022,R-CAID2024}. StreamSpot \cite{Streamspot2016} performs memory-efficient anomaly detection over streaming heterogeneous graphs, while Unicorn \cite{Unicorn2020} detects abnormal runtime provenance patterns. Related learning-based security classification methods \cite{demirkiran2022ensemble,ccayir2021random} also address detection under highly
imbalanced security data.

Despite their effectiveness, knowledge-driven methods remain dependent on known attack
patterns, whereas conventional anomaly detection has limited ability to capture complex temporal dependencies and multi-hop interactions. These limitations motivate richer graph representations that can model how suspicious activities propagate through provenance dependencies over time.

\subsubsection{Graph Learning and Provenance Reduction}
Graph neural networks have increasingly been used to model complex structural and
temporal dependencies in provenance data. ThreaTrace \cite{ThreaTrace2022} applies
GraphSAGE \cite{GraphSAGE2017} to provenance graphs for node-level threat detection and tracing. MAGIC \cite{Magic2024} employs self-supervised masked graph learning to model benign behavior and accommodate changes in system activity. KAIROS \cite{Kairos2024} uses a temporal graph encoder-decoder to capture provenance evolution and identify anomalous interactions. More recent temporal approaches such as TFLAG \cite{jiang2025tflag} further exploit deviations in historical behavior for APT detection.

Although graph learning improves structural and temporal modeling, long-running
provenance graphs may contain millions of nodes and edges, while attack-relevant evidence remains sparse and embedded in large volumes of routine system activity. Processing such graphs in full therefore incurs substantial computational and memory overhead. Sampling and graph reduction can alleviate this problem
\cite{Zhang2025TAPAS,Li2025TeRed,li2023provgrp}, but reducing provenance solely according to graph size or generic structural properties may remove low-frequency but security-relevant entities or break causal dependencies needed for subsequent analysis.

\subsubsection{Attack Correlation and Investigation}
Beyond detecting suspicious events, APT investigation aims to connect distributed
evidence and reconstruct the underlying attack process. Holmes \cite{Holmes2019} and
Poirot \cite{Poirot2019} correlate suspicious information flows or threat knowledge with system provenance to recover attack-related behavior. APTSHIELD
\cite{zhu2023aptshield} organizes APT behavior into multiple attack stages and generates alerts through rule-based analysis. TAGAPT \cite{cheng2025tagapt} operates on provenance-level attack graphs to support attack-stage analysis. ORTHRUS
\cite{jiang2025orthrus} performs forward and backward provenance analysis to identify
important dependencies for attack attribution. KAIROS \cite{Kairos2024} applies Louvain community detection \cite{Louvain2008} to organize highly anomalous activities into suspicious subgraphs, while Slot \cite{qiao2025slot} employs graph reinforcement learning for provenance-driven APT detection and TTP analysis.

However, long-running APT campaigns may span different processes, files, network
endpoints, and time windows. Related attack activities therefore do not necessarily share the same entities, making entity-based correlation insufficient when attack semantics remain similar but their concrete carriers change. Moreover, suspicious events or subgraphs alone do not reveal the complete causal progression and stage transitions of an attack. Consequently, effective investigation requires both cross-window association of fragmented evidence and causal organization of the resulting suspicious activities.

Overall, existing approaches address provenance detection, graph-scale reduction, and
attack investigation largely as separate problems. Detection models capture increasingly rich system behavior but must process substantial redundant activity; graph reduction improves efficiency but may lose sparse security-relevant evidence; and investigation methods may fail to connect related behaviors when evidence shifts across entities and time windows. These limitations motivate a unified approach that identifies, preserves, tracks, and connects attack-relevant information throughout provenance-based APT analysis.

\subsection{Research Motivation}
The motivation of this study stems from three challenges in provenance-based APT analysis: sparse attack-relevant evidence hidden in large volumes of routine system activity, evolving information relevance that weakens long-term correlation, and fragmented anomalies that provide limited support for attack investigation.

To reduce provenance redundancy without losing critical evidence, graph reduction should consider not only graph size, but also the structural and behavioral relevance of retained entities and the causal context surrounding them. This provides a more suitable basis for efficient temporal graph learning than indiscriminate graph compression.

For long-running analysis, the entities carrying attack-relevant information may change over time, while related activities may appear across different processes, files, network endpoints, and time windows. This requires the analysis focus to adapt to changing node relevance and to associate suspicious activities beyond simple entity co-occurrence.

For attack investigation, isolated anomalies must be further organized according to their causal dependencies and temporal progression. Recovering causal paths and identifying representative stage transitions can transform fragmented suspicious activities into coherent and interpretable attack processes.

Overall, this study focuses on identifying, preserving, tracking, and connecting sparse attack-relevant information throughout provenance-based APT analysis, with the goal of improving detection efficiency, long-range correlation, and attack investigation.

\section{Methodology}
\label{sec:methodology}
TGL-APT detects and investigates stealthy multi-stage APT attacks from large-scale dynamic audit logs. It learns normal system behavior from benign training data without requiring malicious labels and operates on reliably collected audit records, where malicious activities may be concealed within benign behavior. As shown in Figure \ref{fig5.1}, TGL-APT progressively extracts attack-relevant information through graph distillation, adaptive temporal learning, and cross-spatiotemporal alignment, followed by causal reconstruction and attack-stage identification.
\begin{figure*}[htbp]
	\centering
	\includegraphics[width=1\linewidth]{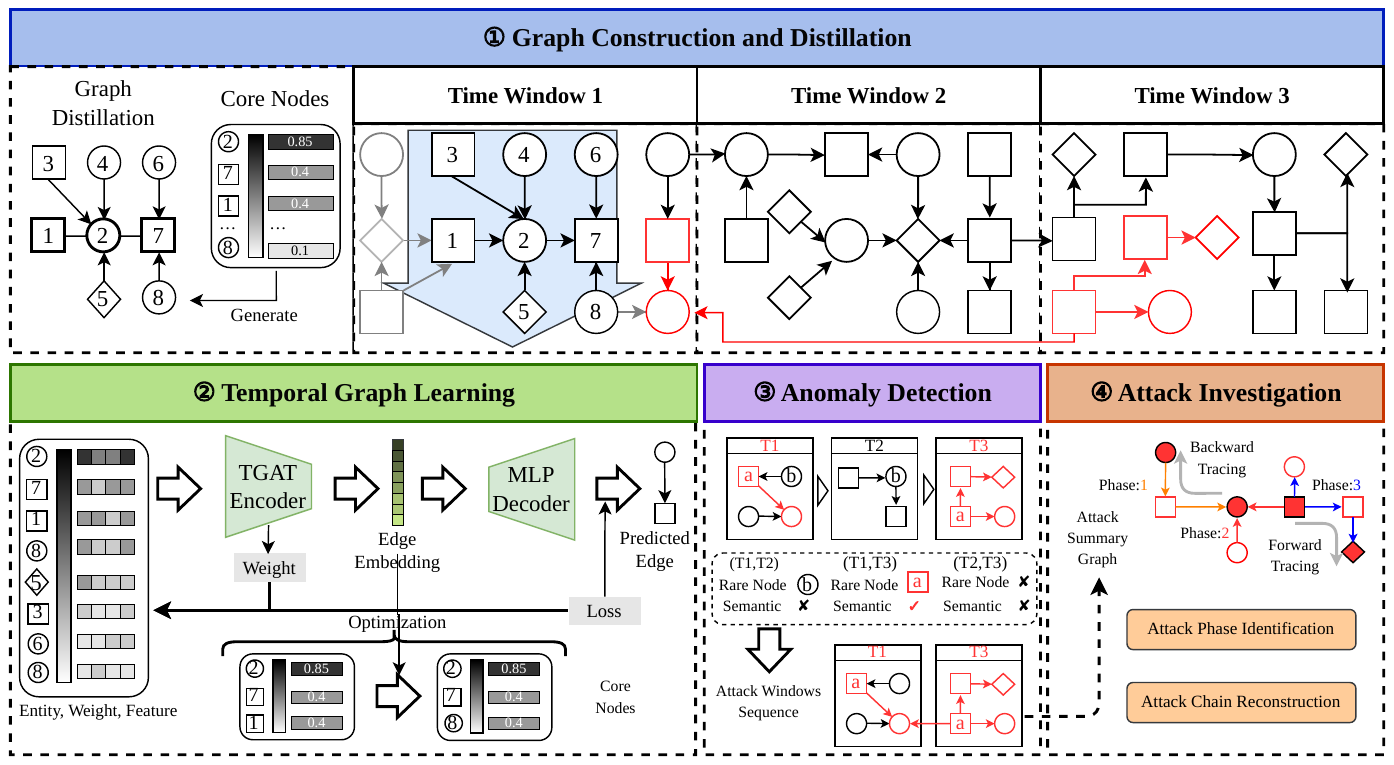}
	\renewcommand{\figurename}{Figure}
	\caption{The architecture of TGL-APT}
	\label{fig5.1}
\end{figure*}

\subsection{Problem Formulation}
\label{pf}
Let $G_{\mathrm{raw}}=(V_{\mathrm{raw}},E_{\mathrm{raw}},T,A)$ denote a dynamic provenance graph containing system entities, dependency edges, timestamps, and attributes, and let $G_{\mathrm{comp}}=\mathcal{D}(G_{\mathrm{raw}})$ denote its distilled representation. The objective of provenance distillation is not merely to reduce graph size, but to remove redundant system information while retaining the evidence required for downstream APT detection and investigation. Let $Y$ represent the security-relevant information needed for downstream APT detection and investigation. We then formulate graph distillation under the Information Bottleneck principle \cite{tishby2000information} as
\begin{equation}
	\mathcal{L}_{\mathrm{IB}}
	=
	I(G_{\mathrm{raw}};G_{\mathrm{comp}})
	-\lambda I(G_{\mathrm{comp}};Y),
\end{equation}
where minimizing $\mathcal{L}_{\mathrm{IB}}$ encourages removing redundant information by minimizing $I(G_{\mathrm{raw}};G_{\mathrm{comp}})$, while preserving task-relevant information by maximizing $I(G_{\mathrm{comp}};Y)$ through the negative second term.

Direct estimation of these mutual-information terms is intractable for large dynamic heterogeneous provenance graphs. We therefore make this objective operational using tractable proxies for redundancy and task relevance. At the node level, task relevance is characterized from two complementary aspects: structural mediation $C(v)$, which captures the role of node $v$ in connecting provenance dependencies, and behavioral specificity $R(v)$, which captures the distinctiveness of its interaction pattern. We define
\begin{equation}
	Score(v)=\alpha C(v)+\beta R(v),
\end{equation}
which provides a tractable measure of node-level information relevance.

Accordingly, the practical distillation objective is formulated as
\begin{equation}
	\min_{G_{\mathrm{comp}}\subseteq G_{\mathrm{raw}}}
	\Omega(G_{\mathrm{comp}})
	-\mu
	\sum_{v\in V_{\mathrm{comp}}} Score(v),
	\label{eq:distillation_objective}
\end{equation}
subject to
\begin{equation}
	\left|
	H_{\mathrm{struct}}(G_{\mathrm{raw}})
	-
	H_{\mathrm{struct}}(G_{\mathrm{comp}})
	\right|
	\leq \epsilon_H,
\end{equation}
and the preservation of temporally valid causal reachability among retained high-priority nodes. Here, $\Omega(G_{\mathrm{comp}})$ measures the size of the retained graph, and $H_{\mathrm{struct}}$ denotes the structural entropy defined in the following subsection. 

\textbf{Remark 1 (Connection to the IB Objective).}
Under the assumption that attack-relevant information is sparse and propagates along temporally valid provenance dependencies, $C(v)$ and $R(v)$ provide complementary surrogates for a node's contribution to $I(G_{\mathrm{comp}};Y)$. A larger $C(v)$ indicates stronger causal mediation, whereas a larger $R(v)$ indicates less redundant and more behaviorally distinctive information. More specifically, we assume a monotonic relation in which, for nodes with comparable local context, stronger causal mediation or behavioral specificity does not decrease their marginal contribution to task-relevant information. Under this assumption, the score term in Eq.~\eqref{eq:distillation_objective} serves as a monotonic ranking surrogate for preserving $I(G_{\mathrm{comp}};Y)$, rather than as a numerical lower bound or exact estimator of mutual information.

Thus, graph-size reduction operationalizes the compression side of the IB objective, while the node-priority score operationalizes its task-relevance side; the structural and causal constraints further preserve the provenance context required for downstream analysis.

\subsection{Provenance Graph Distillation}
\label{subsec:compressed_view}

Based on the information-bottleneck formulation in Section~\ref{pf}, we first formalize the semantic role of the nodes that guide provenance graph distillation.

\textbf{Definition 1 (Information-Bottleneck Node).}
Given the provenance history $G_{\leq t}$ and downstream security-relevant information $Y$, a node $v$ is an information-bottleneck (IB) node at time $t$ if its local provenance context $Z_t(v)$, consisting of its temporally valid dependencies and interaction history, satisfies three semantic properties. First, from an information-theoretic perspective, $Z_t(v)$ acts as an approximately minimal sufficient statistic for $Y$, preserving security-relevant information while suppressing redundant provenance context, i.e.,
\[
I(Y;G_{\leq t}\mid Z_t(v)) \leq \epsilon,
\]
while $I(Z_t(v);G_{\leq t})$ is kept small. Second, from a graph-topological perspective, $v$ acts as a causal bridge whose removal may disrupt temporally valid causal reachability. Third, from a temporal perspective, the relevance of $v$ may change as system behavior evolves, and its behavioral change may indicate a transition toward an attack-relevant state. Therefore, an IB node is characterized by information sufficiency, causal mediation, and temporal transition relevance rather than by structural importance alone.

We construct $G_{\mathrm{raw}}$ from audit records collected from Windows ETW, Linux Auditd, and other sources. The graph contains three types of system entities: processes, files, and network sockets. Directed edges represent system dependencies, while $T$ and $A$ provide temporal and attribute information, respectively. The detailed entities, attributes, and dependency relationships are summarized in Table~\ref{tab5.1}.
\begin{table}[htbp]
	\centering
	\caption{System Entities, Their Attributes, and Dependency Relationships}
	\label{tab5.1}
	\footnotesize
	\setlength{\tabcolsep}{2pt}
	\begin{tabular}{ccc p{2cm}}
		\toprule
		Source Node & Target Node & Relation & Entity Attribute \\
		\midrule
		\multirow{3}{*}{Process}
		& Process & Start, Close, Clone        & Image pathname \\
		& File    & Read, Write, Open, Exec    & File pathname \\
		& Socket  & Send, Receive              & Src/Dst IP/port \\
		\bottomrule
	\end{tabular}
\end{table}

Since directly estimating the mutual-information conditions in Definition~1 is intractable for large dynamic provenance graphs, we operationalize IB-node identification using structural mediation and behavioral specificity as tractable surrogates. The distillation priority of node $v_i$ is defined as
\begin{equation}
	S_{\text{core}}(v_i) = \alpha \cdot C(v_i) + \beta \cdot R(v_i), \quad \alpha + \beta = 1, \beta > \alpha,
\end{equation}
where $C(v_i)$ measures structural importance by combining Betweenness Centrality and PageRank, and $R(v_i)$ characterizes behavioral specificity using Inverse Interaction Frequency and interaction entropy. A larger weight is assigned to $R(v_i)$ to reduce the bias of global centrality toward frequently active benign entities and to emphasize low-frequency, behaviorally distinctive interactions.

The behavioral specificity term helps retain low-frequency and locally connected entities that may be overlooked by global centrality. Low-ranked intermediate nodes required for causal connectivity are further preserved during the subsequent pruning stage.

To control structural distortion introduced by graph reduction, we use a spectral structural entropy derived from the normalized graph Laplacian. For a graph $G$, let $\lambda_1,\ldots,\lambda_n$ denote the eigenvalues of its normalized Laplacian matrix. We first normalize the spectrum as
\begin{equation}
	p_i = \frac{\lambda_i}{\sum_{j=1}^{n}\lambda_j},
\end{equation}
and define the structural entropy as
\begin{equation}
	H_{\text{struct}}(G)
	=
	-\sum_{i=1}^{n} p_i \log p_i,
\end{equation}
where $p_i \log p_i$ is defined as zero when $p_i=0$.

Based on the node priorities, TGL-APT performs a two-stage distillation process. In the knowledge screening stage, nodes are ranked according to $S_{\text{core}}(v_i)$, and the top-ranked nodes are retained as the candidate set. In the structure pruning stage, redundant nodes are removed subject to two constraints. First, the causal connectivity constraint preserves at least one temporally valid causal path between retained nodes that are causally connected in $G_{\text{raw}}$; low-ranked intermediate nodes are retained when required to maintain such paths. Second, the structural information constraint requires the structural entropy deviation between the raw and distilled graphs to remain below a predefined threshold $\epsilon_H$. Together, these constraints preserve the structural and causal context surrounding the selected information-bottleneck nodes.

The node-priority score determines which entities are considered during
knowledge screening, whereas the subsequent constraints determine whether
their surrounding provenance context can be safely reduced. The resulting
preservation and downstream stability properties are summarized in
Lemma~\ref{lemma:distillation_preservation}.

\begin{lemma}[Structural-Causal Preservation and Detection Stability]
	\label{lemma:distillation_preservation}
	Let $G_{\mathrm{comp}}$ be the graph produced by the two-stage
	distillation process. Let $\mathcal{P}_{\mathrm{req}}$ denote the causal
	relations between retained nodes connected by at least one temporally valid
	path in $G_{\mathrm{raw}}$, and $\mathcal{P}_{\mathrm{pres}}$ denote the
	subset whose causal reachability is preserved in $G_{\mathrm{comp}}$.
	Define
	\begin{equation}
		\rho_{\mathrm{causal}}
		=
		\frac{|\mathcal{P}_{\mathrm{pres}}|}
		{|\mathcal{P}_{\mathrm{req}}|},
	\end{equation}
	where $\rho_{\mathrm{causal}}=1$ when
	$\mathcal{P}_{\mathrm{req}}=\emptyset$, and
	\begin{equation}
		\mathcal{L}_{\mathrm{info}}
		=
		\left|
		H_{\mathrm{struct}}(G_{\mathrm{raw}})
		-
		H_{\mathrm{struct}}(G_{\mathrm{comp}})
		\right|
		+
		\kappa(1-\rho_{\mathrm{causal}}).
	\end{equation}
	
	If the structural entropy deviation is bounded by $\epsilon_H$ and
	causal reachability is preserved for every relation in
	$\mathcal{P}_{\mathrm{req}}$, then
	\begin{equation}
		\mathcal{L}_{\mathrm{info}}
		(G_{\mathrm{raw}},G_{\mathrm{comp}})
		\leq \epsilon_H.
	\end{equation}
	
	Furthermore, let $Z_e(G)$ denote the temporal representation of a retained
	event $e$. If the temporal encoder is $L_{\Phi}$-Lipschitz with respect to
	$\mathcal{L}_{\mathrm{info}}$ and the reconstruction-error function is
	$L_{\ell}$-Lipschitz with respect to $Z_e$, then
	\begin{equation}
		\left|
		\mathrm{RE}(e;G_{\mathrm{raw}})
		-
		\mathrm{RE}(e;G_{\mathrm{comp}})
		\right|
		\leq
		L_{\ell}L_{\Phi}\epsilon_H.
	\end{equation}
\end{lemma}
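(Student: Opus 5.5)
The proof has two parts and I would do them in the order stated. Each part is a direct unpacking of the definitions plus a short chain of inequalities.

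\textbf{First part.} I would evaluate the two summands of $\mathcal{L}_{\mathrm{info}}$ separately.

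The hypothesis that causal reachability is preserved for every relation in $\mathcal{P}_{\mathrm{req}}$ means $\mathcal{P}_{\mathrm{pres}}=\mathcal{P}_{\mathrm{req}}$. Since $\mathcal{P}_{\mathrm{pres}}\subseteq\mathcal{P}_{\mathrm{req}}$ by definition, this gives $\rho_{\mathrm{causal}}=1$. This needs one case split.
\begin{itemize}
\item If $\mathcal{P}_{\mathrm{req}}\neq\emptyset$, the ratio of equal cardinalities is $1$.
\item If $\mathcal{P}_{\mathrm{req}}=\emptyset$, the stated convention sets $\rho_{\mathrm{causal}}=1$ directly.
\end{itemize}
In both cases the penalty $\kappa(1-\rho_{\mathrm{causal}})$ vanishes, whatever the value of $\kappa$.

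The first summand is exactly the structural-entropy deviation, which is bounded by $\epsilon_H$ by hypothesis. This is the structure-pruning constraint of the two-stage distillation. Adding the two summands gives $\mathcal{L}_{\mathrm{info}}\le\epsilon_H$.

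I would also remark why the hypothesis on causal reachability is available. The causal connectivity constraint of the pruning stage keeps low-ranked intermediate nodes precisely so that some temporally valid path survives for each required pair. So the hypothesis is met by construction of $G_{\mathrm{comp}}$.

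\textbf{Second part.} I would compose the two Lipschitz conditions. The intended reading of the hypothesis ``the temporal encoder is $L_\Phi$-Lipschitz with respect to $\mathcal{L}_{\mathrm{info}}$'' is
\[
\|Z_e(G_{\mathrm{raw}})-Z_e(G_{\mathrm{comp}})\|\le L_\Phi\,\mathcal{L}_{\mathrm{info}}(G_{\mathrm{raw}},G_{\mathrm{comp}}),
\]
so that $\mathcal{L}_{\mathrm{info}}$ plays the role of a discrepancy between the two graphs. The condition that $\mathrm{RE}$ is $L_\ell$-Lipschitz in $Z_e$ gives
\[
|\mathrm{RE}(e;G_{\mathrm{raw}})-\mathrm{RE}(e;G_{\mathrm{comp}})|\le L_\ell\|Z_e(G_{\mathrm{raw}})-Z_e(G_{\mathrm{comp}})\|.
\]
Chaining these two inequalities and substituting the first part's bound $\mathcal{L}_{\mathrm{info}}\le\epsilon_H$ yields $L_\ell L_\Phi\epsilon_H$.

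\textbf{Main obstacle.} The only delicate point is interpretive. $\mathcal{L}_{\mathrm{info}}$ is not a metric on graphs; for instance, distinct graphs can have zero discrepancy. So the Lipschitz hypothesis must be stated explicitly in the displayed form above rather than as Lipschitz continuity in a metric.

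I would also note that $e$ must be a retained event, so that $Z_e(G_{\mathrm{comp}})$ is defined. The reconstruction error must be evaluated with the same decoder on both graphs, so that the only source of discrepancy is the representation $Z_e$.

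Once these conventions are fixed, the argument is a two-line composition of inequalities.
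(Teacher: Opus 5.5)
Your proposal is correct and follows the paper's proof step for step. The causal-connectivity constraint forces $\rho_{\mathrm{causal}}=1$, so $\mathcal{L}_{\mathrm{info}}$ reduces to the entropy deviation bounded by $\epsilon_H$, and the two Lipschitz conditions are then chained exactly in the form $\|Z_e(G_{\mathrm{raw}})-Z_e(G_{\mathrm{comp}})\|\le L_\Phi\,\mathcal{L}_{\mathrm{info}}$; your handling of the empty-$\mathcal{P}_{\mathrm{req}}$ case and of this non-metric Lipschitz reading just makes explicit what the paper leaves implicit.
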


\begin{proof}
	The causal-connectivity constraint implies
	$\mathcal{P}_{\mathrm{pres}}=\mathcal{P}_{\mathrm{req}}$ and hence
	$\rho_{\mathrm{causal}}=1$. Together with the structural-entropy
	constraint,
	\[
	\mathcal{L}_{\mathrm{info}}
	=
	\left|
	H_{\mathrm{struct}}(G_{\mathrm{raw}})
	-
	H_{\mathrm{struct}}(G_{\mathrm{comp}})
	\right|
	\leq \epsilon_H.
	\]
	
	For a retained event $e$, the Lipschitz assumptions further give
	\[
	\begin{aligned}
		&
		\left|
		\mathrm{RE}(e;G_{\mathrm{raw}})
		-
		\mathrm{RE}(e;G_{\mathrm{comp}})
		\right| \\
		&\leq
		L_{\ell}
		\left\|
		Z_e(G_{\mathrm{raw}})
		-
		Z_e(G_{\mathrm{comp}})
		\right\|
		\leq
		L_{\ell}L_{\Phi}
		\mathcal{L}_{\mathrm{info}}
		\leq
		L_{\ell}L_{\Phi}\epsilon_H.
	\end{aligned}
	\]
	
	Thus, bounded structural-causal distortion induces a bounded perturbation
	in the reconstruction-error scores used for downstream anomaly detection.
\end{proof}

Lemma~\ref{lemma:distillation_preservation} shows that the proposed
distillation not only bounds structural and causal information loss, but
also limits its influence on downstream anomaly scores under the stated
stability assumptions. The resulting distilled graph
$G_{\mathrm{comp}}=(V_{\mathrm{comp}},E_{\mathrm{comp}})$ is then used
for subsequent temporal graph learning.

\subsection{Temporal Graph Learning with Adaptive Focus Mechanism}
The temporal graph learning module aims to construct a representation learning framework capable of simultaneously capturing the structural evolution and temporal dependencies of the provenance graph. This module adopts an encoder-decoder architecture, learning normal system behavior patterns using only benign training data while dynamically refining its analysis focus as observed system behavior evolves. As shown in Figure \ref{fig5.1}, when a new edge $e_t$ appears in the compressed view $G_{\text{comp}}$, the encoder generates an edge embedding $\mathbf{Z}$ based on the neighborhood state $\mathbf{s}_{t^-}$ before the edge's occurrence, and the decoder reconstructs the edge type probability distribution $\mathbf{P}(e_t)$ from this embedding. The encoder and decoder are jointly trained exclusively on benign data, aiming to minimize the reconstruction error for benign edges.

The core of the encoder is to construct an edge embedding that integrates spatiotemporal information. TGL-APT employs an enhanced Temporal Graph Network (TGN) architecture, incorporating a multi-head graph attention mechanism to focus on potentially anomalous interaction patterns while capturing the evolution of node historical states.

For a new edge $e_t = (v_{src}, v_{dst})$ occurring at time $t$, the computation of its embedding $\mathbf{Z}_t$ depends not only on the historical states $\mathbf{s}_{t^-}(v_{src})$ and $\mathbf{s}_{t^-}(v_{dst})$ of the source and target nodes but also on the local neighborhood structure obtained through sampling. Unlike the base TGN, TGL-APT utilizes the set of core nodes $H$ selected in Section \ref{subsec:compressed_view} to guide neighborhood sampling, prioritizing connections to core nodes, thus more efficiently capturing critical paths that attacks might exploit. The encoding process is as follows:
\begin{equation}
	\mathbf{Z}_t = \text{GNN}_{\theta}(\mathbf{s}_{t^-}(v_{src}), \mathbf{s}_{t^-}(v_{dst}), \mathcal{N}_{H}(v_{src}), \mathcal{N}_{H}(v_{dst}), \mathbf{e}, \mathbf{t}),
\end{equation}
where $\mathcal{N}_{H}(\cdot)$ denotes biased neighborhood sampling based on the core node set $H$, $\mathbf{e}$ is the feature vector of edges within the neighborhood (formed by concatenating node feature hashes and edge types), and $\mathbf{t}$ is the timestamp vector of the corresponding edges. The GNN model employs a multi-head attention mechanism for message passing, enabling the model to assign differentiated importance weights to different neighbor interactions, thereby highlighting anomalous or rare connection patterns.

After a new edge $e_t$ occurs, the states of the source node $v_{src}$ and target node $v_{dst}$ need to be updated accordingly. TGL-APT uses a Gated Recurrent Unit (GRU) to iteratively update the states of both endpoints, enabling them to retain memory of their historical interaction sequences:
\begin{equation}
	\begin{aligned}
		\mathbf{s}_t(v_{src}) &= \mathrm{GRU}_{\phi}
		\left(\mathbf{s}_{t^-}(v_{src}), \mathbf{Z}_t\right), \\
		\mathbf{s}_t(v_{dst}) &= \mathrm{GRU}_{\phi}
		\left(\mathbf{s}_{t^-}(v_{dst}), \mathbf{Z}_t\right).
	\end{aligned}
\end{equation}

The updated state $\mathbf{s}_t(\cdot)$ is then used for subsequent edge-embedding computations. 

The decoder is an MLP whose task is to reconstruct, based on the edge embedding $\mathbf{Z}_t$ output by the encoder, the probability distribution $\mathbf{P}(e_t)$ over the set of predefined interaction types for edge $e_t$:
\begin{equation}
	\mathbf{P}(e_t) = \text{MLP}_{\psi}(\mathbf{Z}_t).
\end{equation}

During training, TGL-APT optimizes the model parameters $\theta, \phi, \psi$ by minimizing the cross-entropy loss between the predicted distribution $\mathbf{P}(e_t)$ and the one-hot encoding $\mathbf{L}(e_t)$ of the true edge type:
\begin{equation}
	\mathcal{L} = \text{CrossEntropy}(\mathbf{P}(e_t), \mathbf{L}(e_t)).
\end{equation}

The model thus learns what type of interaction is considered "normal" within a given spatiotemporal context. During inference, the reconstruction error is this cross-entropy loss value, reflecting the deviation of the current event from the learned normal patterns.

In long-running provenance streams, the relevance of information-bottleneck nodes may change as system behavior evolves. A core node set selected from the initial graph therefore cannot remain equally informative over time. To address this issue, TGL-APT continuously refines the core set using model-derived signals from temporal learning. Attention contribution reflects how strongly a node influences current neighborhood representations, while embedding deviation measures how far its current behavior departs from the benign prototype. Together, these signals allow the model to incorporate newly informative nodes and discard nodes whose contribution has diminished.
\begin{algorithm}
	\small
	\caption{Adaptive Focus Mechanism}\label{alg5.1}
	\begin{algorithmic}[1]
		\Require Current core node set $H_{\text{curr}}$, graph attention weights, node embedding vectors, benign embedding cluster centers, total number of nodes $N$
		\Ensure Updated core node set $H_{\text{new}}$
		
		\State Compute the average graph attention weight for all nodes over the past epoch
		\State Sort in descending order by weight, take the top $\eta_{\alpha}$ nodes and add to $H_{\text{add}}^{\alpha}$
		
		\State Compute the Euclidean distance between all node embeddings and the benign prototype embedding cluster center
		\State Sort in descending order by deviation, take the top $\eta_{\delta}$ nodes and add to $H_{\text{add}}^{\delta}$
		
		\For{$v \in H_{\text{curr}}$}
		\State Compute the average attention contribution $C_{\text{att}}(v)$ for node $v$
		\EndFor
		\State Sort by $C_{\text{att}}(v)$ in ascending order, remove the bottom $\eta_{\text{drop}}$ nodes, the remaining nodes form $H_{\text{keep}}$
		
		\State $H_{\text{new}} \leftarrow (H_{\text{add}}^{\alpha} \cup H_{\text{add}}^{\delta}) \cup H_{\text{keep}}$
		\State $M_{\text{max}} \leftarrow \gamma \cdot N$ 
		\If{$|H_{\text{new}}| > M_{\text{max}}$}
		\State Compute the comprehensive score $S(v)$ for each node in $H_{\text{new}}$
		\State Sort by $S(v)$ in descending order, retain the top $M_{\text{max}}$ nodes
		\EndIf
		
		\State \Return $H_{\text{new}}$
	\end{algorithmic}
\end{algorithm}

The adaptive update follows a relevance-refinement process. Nodes with high attention contribution or large embedding deviation are treated as newly informative candidates, while low-contribution nodes are removed from the current core set. The updated set is further constrained by a proportional size budget to prevent uncontrolled growth. In this way, adaptive focusing extends the initial graph distillation from a static selection to a dynamic information focus that follows changes in node relevance over time. 

\subsection{Anomaly Detection}
The anomaly detection module aims to identify coherent suspicious behavior sequences from continuous system activities. Its core design is \textbf{cross-spatiotemporal attack fingerprint alignment}, which addresses the \textbf{synonymous but heterogeneous} problem, where behaviorally related attack activities may appear across different entities and time windows without direct entity co-occurrence. TGL-APT therefore complements entity-based association with semantic similarity between suspicious time windows to recover such fragmented associations. This module operates at the time-window level, while attack stages are subsequently identified during attack investigation.

TGL-APT divides the temporal graph into fixed-length time windows $T$. For each window, TGL-APT identifies a set of suspicious nodes $\mathcal{S}_T$. A node $v$ is deemed suspicious if it simultaneously satisfies the following two conditions:

(1) Anomaly: Node $v$ must be the endpoint of at least one anomalous edge. Whether an edge $e$ is anomalous depends on the reconstruction error $\text{RE}(e)$ produced by the temporal graph learning module. TGL-APT dynamically determines a reconstruction error threshold $\sigma_T$ from the reconstruction-error distribution of each time window. If $\text{RE}(e) > \sigma_T$, edge $e$ is considered anomalous.

(2) Rarity: The system entity represented by node $v$ should be uncommon in normal system behavior. TGL-APT uses Inverse Document Frequency (IDF) to quantify the global rarity of a node. For a node $v$ appearing in $N_v$ historical time windows, its IDF value is: $\text{IDF}(v) = \ln\left( \frac{N}{N_v + 1} \right)$, where $N$ is the total number of windows. A higher IDF value indicates rarer nodes. A rarity threshold $\alpha$ is set; node $v$ satisfies the rarity condition only if $\text{IDF}(v) > \alpha$.

A node meeting both the anomaly and rarity conditions is included in the current window's suspicious node set $\mathcal{S}_T$. 

Entity co-occurrence provides a direct way to associate suspicious windows, but it may miss related activities when the involved entities change over time. To complement this signal, TGL-APT constructs an attack fingerprint for each suspicious time window and measures semantic similarity between window fingerprints. Each time window $T$ is regarded as a semantic document, and the suspicious nodes $S_T$ are treated as its behavioral words. Unlike general semantic association over all system entities, the fingerprint is constructed only from suspicious activities and weighted according to their discriminative importance, focusing the association on attack-relevant behavioral evidence.

Different from the simple bag-of-words model, TGL-APT uses a TF-IDF weighted mechanism to generate attack fingerprint vectors, where the term frequency $\text{TF}(v, T)$ is determined by the anomaly intensity of edges connected to node $v$, so as to highlight the contribution of key suspicious entities to the attack fingerprint.

To reduce high-dimensional sparsity, contrastive learning is used to map TF-IDF weighted attack fingerprints into a low-dimensional embedding space. During training, windows sharing suspicious entities are treated as positive pairs, while temporally distant windows without shared entities are treated as negative pairs. Entity co-occurrence therefore serves as a weak supervisory signal for learning semantic representations of suspicious windows. The learned embeddings capture behavioral similarity between windows, allowing related suspicious activities to be associated even when they do not share the same entities.

During inference, the semantic similarity between window embeddings is measured by cosine similarity. The final window association adopts a fusion rule that combines entity co-occurrence and semantic similarity: $T_i$ and $T_j$ are associated if $\mathcal{S}_i \cap \mathcal{S}_j \neq \emptyset$ or $\text{Sim}_{\text{sem}}(T_i, T_j) > \kappa$. This dual-channel association complements exact entity matching with fingerprint-level semantic similarity, allowing fragmented suspicious activities across different entities and time windows to be linked for subsequent attack investigation.

Each queue $Q$ maintains a dynamically updated anomaly score $\text{AnomalyScore}(Q)$. This score is the product of the anomaly scores of all time windows in the queue, where the anomaly score for a single window $T$ is defined as the average reconstruction error of all anomalous edges within that window.
\begin{equation}
	\text{AnomalyScore}(Q) = \prod_{i=1}^{n} \text{AnomalyScore}(T_i).
\end{equation}

TGL-APT presets a queue anomaly threshold $\beta$ based on benign validation data. At runtime, when the anomaly score of a queue $Q$ exceeds $\beta$, that queue is flagged as a malicious activity sequence, triggering an investigation alert. This sequence $\{T_1, T_2, ..., T_n\}$ represents the detected unit of potentially malicious activity that is coherent in time and semantics.

\subsection{Attack Investigation}
The core task of attack chain reconstruction is to restore a coherent and concise attack path from the detected malicious activity sequence $Q$. TGL-APT employs a lightweight reconstruction strategy based on causal expansion of anomalous nodes, directly utilizing the suspicious nodes identified during the anomaly detection stage and their causal connections in the original provenance graph to rebuild the attack flow, avoiding the computational overhead and logical redundancy associated with complex community detection and subgraph fusion.

First, from all time windows contained in the malicious activity sequence $Q$, all nodes marked as "suspicious" are extracted to form the initial anomalous node set $V_{\text{anomaly}}$. For each node $v$ in $V_{\text{anomaly}}$, a bidirectional causal exploration is performed in the original full provenance graph $G_{\text{raw}}$.

(1) Forward Causal Expansion: Search forward in time along causal edges (e.g., process creation, file read/write, network connection) starting from $v$, within a reasonable time range $\Delta T$, and add all reachable nodes to the set $V_{\text{forward}}(v)$.

(2) Backward Traceability Reconstruction: Search backward in time along dependency paths (e.g., parent process, file dependency) that can reach $v$ within $\Delta T$, and add all such nodes to the set $V_{\text{backward}}(v)$. The expanded nodes and edges are first merged into a preliminary candidate subgraph. To suppress irrelevant expansion, causal branches that neither connect suspicious seed nodes nor lie on a temporally valid causal path between them are removed. The remaining nodes and edges form $G_{\text{candidate}}$. A timestamp-constrained path search is then performed on $G_{\text{candidate}}$ to extract the causal backbone $C$, while redundant side branches are excluded from the reconstructed attack chain.
\begin{figure}[htbp]
	\centering
	\includegraphics[width=1\linewidth]{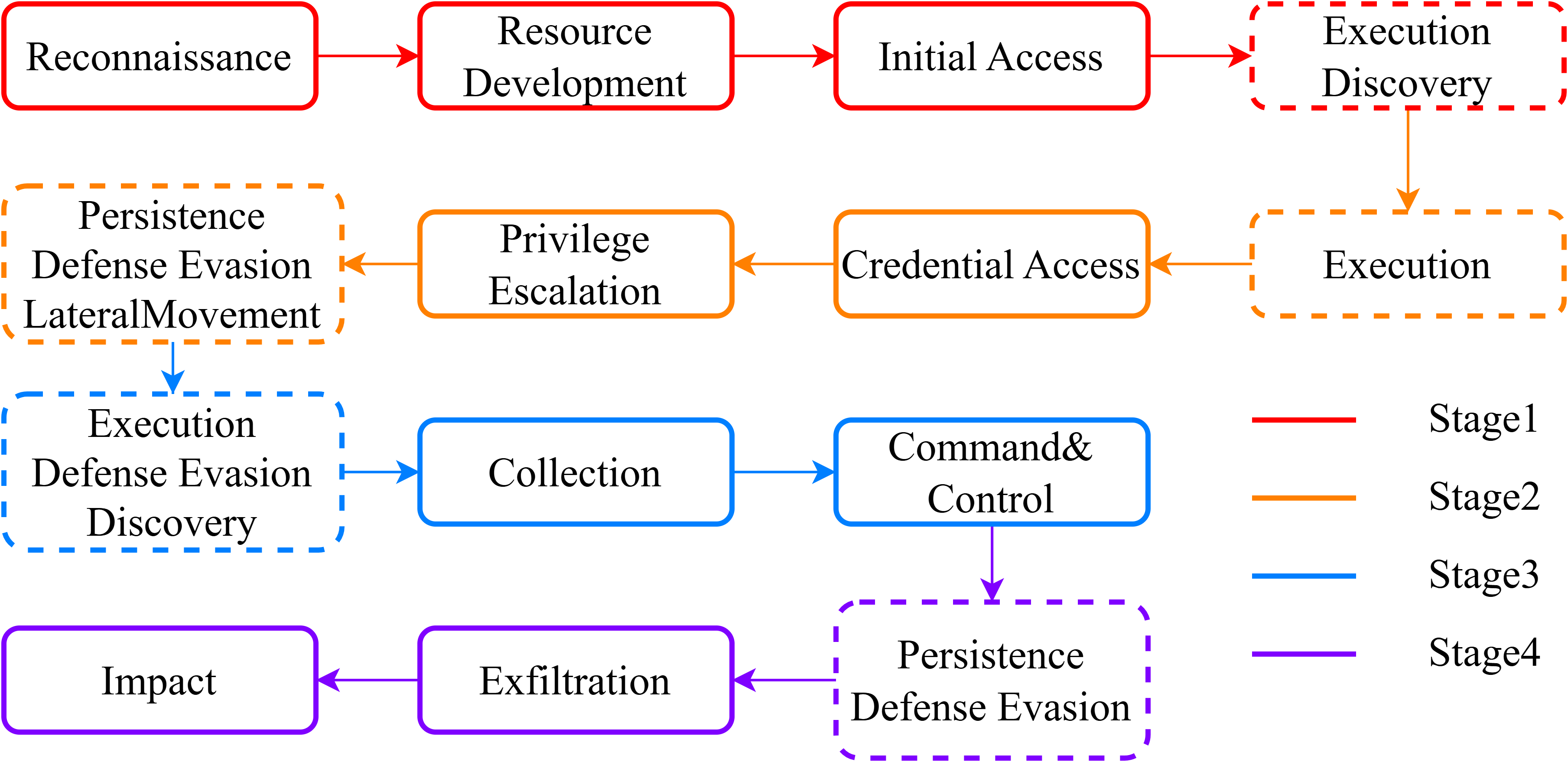}
	\renewcommand{\figurename}{Figure}
	\caption{Four-stage division of APT attacks and their tactical dependencies}
	\label{fig5.2}
\end{figure}
As shown in Figure \ref{fig5.2}, following TAGAPT~\cite{cheng2025tagapt}, we organize the MITRE ATT\&CK tactics into four attack stages rather than directly adopting the Cyber Kill Chain. The stage boundaries are automatically determined by the timestamps of the identified Stage Characterization Nodes (SCNs).

TGL-APT proposes an automated phase categorization method based on structural pivotalness and behavioral specificity, centered around identifying Stage Characterization Nodes (SCNs). SCNs are typically process nodes that play a dominant role within a stage, characterized by three features. Feature 1, Global Pivotalness: Act as key connection points across multiple functional flows within that stage. Feature 2, Local Connectivity: Possess high out-degree and in-degree, actively initiating or passively receiving various types of operations. Feature 3, Behavioral Specificity: Their behavior patterns (types of functional flows they participate in) are significantly distinct from other SCNs, representing different attack tactics. Based on this, TGL-APT designs the following automated phase categorization algorithm:

TGL-APT derives temporally valid functional flows from the pruned candidate graph $G_{\text{candidate}}$ to form a set $F$. Each functional flow represents a causally connected sequence of activities corresponding to a candidate attack sub-goal. It then calculates the number of functional flows $|F_{a_i}|$ carried by each process node $a_i$ in $\mathcal{C}$ and its local graph density, screening nodes that simultaneously satisfy high pivotalness and high connectivity to form a candidate set $P_{\text{can}}$. To avoid misclassifying processes performing similar tasks as different stage starts, the functional flow similarity between nodes $a_i$ and $a_j$ is computed:
\begin{equation}
	Sim(a_i, a_j) = \frac{|F_{a_i} \cap F_{a_j}|}{\max(|F_{a_i}|, |F_{a_j}|)}.
\end{equation}

If $Sim(a_i,a_j)\geq\theta_{\mathrm{sim}}$, the nodes are considered to carry similar functional flows, and the more pivotal node is retained, merging redundant candidates. Finally, nodes are sorted by their comprehensive pivotalness score, and the top $K_{\max}$ nodes are selected as stage characterization nodes $P_{\text{scn}}$. The timestamps of these nodes in the attack chain are used as boundaries to partition the different attack stages $\text{Phases}$. This method effectively identifies key nodes that signify stage transitions, enabling automated phase categorization of complex APT attacks and assisting security analysts in better understanding attack tactic evolution and formulating response strategies.

\section{Evaluation}
\label{sec:eval}
To comprehensively evaluate the effectiveness, efficiency, and practicality of TGL-APT for APT detection and investigation, we conduct systematic experiments from multiple perspectives, including detection performance, comparison with existing methods, computational efficiency, hyperparameter sensitivity, and attack reconstruction capability. Specifically, this section aims to answer the following key research questions through experiments:
(1) Can TGL-APT accurately detect attack behaviors in running systems? (2) How does TGL-APT compare with state-of-the-art techniques? (3) How efficient is TGL-APT in terms of training time, detection latency, and memory consumption? (4) How do hyperparameters affect the detection and runtime performance of TGL-APT? (5) Can TGL-APT accurately reconstruct attack behaviors from the original provenance graph?

A prototype of TGL-APT was implemented in Python. scikit-learn was used for hierarchical feature hashing and PyG for TGL-APT's graph learning framework. Finally, GraphViz was used to visualize the attack summary graphs for manual inspection. Experiments were conducted on a Linux workstation with 64GB RAM, an Intel Core Ultra 7 processor (3.9 GHz), and an NVIDIA RTX A5000 48GB GPU.

\subsection{Experimental Setup}
\paragraph{Datasets and Data Splits.}
This section evaluates TGL-APT on three public datasets from DARPA Engagement 3: Cadets, ClearScope, and Theia \cite{DARPAe3}. These datasets differ in granularity, log format, attack methods, and system environments, providing diverse settings for evaluating TGL-APT and comparing it with state-of-the-art APT detection methods.

The DARPA E3 dataset was collected from enterprise networks during red-blue adversarial exercises in the DARPA Transparent Computing project \cite{DARPAe3}. It provides attack scenarios, tools, attack details, example attack graphs, and system logs. The attacks involve activities such as HTTP communication, privilege elevation, backdoor execution, and .dll or .so injection across different hosts and endpoints. We use the Cadets, ClearScope, and Theia sub-datasets and extract the nodes and edges required for model training and anomaly detection. The datasets are divided chronologically into training, validation, and testing periods to preserve temporal order. Events are further divided into consecutive non-overlapping windows with a length and step size of 15 minutes. The detailed dataset splits and temporal window settings are summarized in Table~\ref{tab:dataset_settings}.
\begin{table}[htbp]
	\centering
	\caption{Dataset splits and temporal window settings.}
	\label{tab:dataset_settings}
	\footnotesize
	\setlength{\tabcolsep}{3pt}
	\begin{tabular}{cccccc}
		\toprule
		Dataset & Training & Validation & Testing & Window & Step \\
		\midrule
		Cadets    & Apr. 2--4 & Apr. 5 & Apr. 6--7   & 15 min & 15 min \\
		ClearScope & Apr. 4--6 & Apr. 7 & Apr. 10--11 & 15 min & 15 min \\
		Theia      & Apr. 3--5 & Apr. 9 & Apr. 10--12 & 15 min & 15 min \\
		\bottomrule
	\end{tabular}
\end{table}

\paragraph{Implementation Details.}
TGL-APT is trained using the Adam optimizer with a learning rate of $5\times10^{-5}$ and a temporal graph embedding dimension of 100. The model is trained for 50 epochs on Cadets and Theia and for 30 epochs on ClearScope. All trainable parameters are initialized using the default initialization schemes of their corresponding PyTorch
layers. For anomaly detection, the reconstruction-error threshold is dynamically calculated for each temporal window as $\sigma_T=\mu_T+0.6s_T$, where $\mu_T$ and $s_T$ denote the mean and standard deviation, respectively, of the reconstruction errors of all edges within the current window. An edge is identified as anomalous
when its reconstruction error exceeds $\sigma_T$. The functional-flow similarity threshold $\theta_{\mathrm{sim}}$ is empirically set to 0.54.

\subsection{Detection Performance}
To evaluate the fundamental anomaly detection capability of the TGL-APT framework, detection performance experiments were first conducted on three public datasets. This experiment aims to answer research question (1), i.e., whether TGL-APT can accurately identify attack behaviors from system operation logs. The widely recognized confusion matrix metrics were employed for evaluation, including True Positives (TP), True Negatives (TN), False Positives (FP), and False Negatives (FN). Based on these, Precision, Recall, Accuracy, and F1-Score were calculated to comprehensively measure the model's overall performance under imbalanced positive and negative samples.

The experimental results are shown in Table \ref{tab5.2}. On the three real-world DARPA E3 datasets, TGL-APT maintained a high recall rate, meaning nearly all attack events were successfully captured. The precision was slightly lower, at 0.9167, 0.8333, and 0.8889 on the three datasets respectively, due to a very small number of false positives. Nevertheless, TGL-APT maintained strong F1-scores across all datasets, demonstrating a favorable balance between precision and recall while keeping the false positive rate low.
\begin{table}[htbp]
	\centering
	\renewcommand{\tablename}{Table}
	\caption{Detection Performance of TGL-APT}
	\label{tab5.2}
	\setlength{\tabcolsep}{3pt}
	\begin{tabular}{c|cccc|cccc}
		\toprule
		Dataset & TP & TN & FP & FN & Precision & Recall & Accuracy & F1 \\ 
		\midrule
		Cadets            & 11          & 328         & 1           & 0           & 0.9167             & 1.0000          & 0.9971            & 0.9565      \\
		ClearScope        & 5           & 113         & 1           & 0           & 0.8333             & 1.0000          & 0.9916            & 0.9091      \\
		Theia             & 8           & 217         & 1           & 1           & 0.8889            & 0.8889          & 0.9912            & 0.8889      \\ \bottomrule
	\end{tabular}
\end{table}

\subsection{Comparison Experiment}
To further evaluate the detection performance of TGL-APT, we compare it with three representative provenance-based APT detection methods, KAIROS \cite{Kairos2024}, TFLAG \cite{jiang2025tflag}, and ORTHRUS \cite{jiang2025orthrus}. This experiment aims to answer research question (2), i.e., how TGL-APT performs compared with existing APT detection approaches. The experimental results are shown in Table \ref{tab5.3}. Overall, TGL-APT achieves strong detection performance across the three datasets, particularly in recall and F1-score.
\begin{table}[htbp]
	\centering
	\renewcommand{\tablename}{Table}
	\caption{Comparison Results of TGL-APT with Other Methods}
	\label{tab5.3}
	\setlength{\tabcolsep}{3pt}
	\begin{tabular}{ccccccc}
		\toprule
		Dataset & Method & Precision & Recall & Accuracy & F1 & FPR \\
		\midrule
		
		\multirow{4}{*}{Cadets}
		& KAIROS  & 0.8750 & 0.6364 & 0.9853 & 0.7368 & 0.0030 \\
		& TFLAG   & 0.8889 & 0.7273 & 0.9882 & 0.8000 & 0.0030 \\
		& ORTHRUS & 0.0005 & 0.0769 & 0.9897 & 0.0010 & \textbf{0.0101} \\
		& TGL-APT & \textbf{0.9167} & \textbf{1.0000} & \textbf{0.9971} & \textbf{0.9565} & 0.0030 \\
		\midrule
		
		\multirow{4}{*}{ClearScope}
		& KAIROS  & 0.7143 & \textbf{1.0000} & 0.9832 & 0.8333 & 0.0175 \\
		& TFLAG   & \textbf{1.0000} & 0.8000 & \textbf{0.9946} & 0.8889 & \textbf{0.0000} \\
		& ORTHRUS & 0.7500 & 0.6000 & 0.9748 & 0.6667 & 0.0088 \\
		& TGL-APT & 0.8333 & \textbf{1.0000} & 0.9916 & \textbf{0.9091} & 0.0088 \\
		\midrule
		
		\multirow{4}{*}{Theia}
		& KAIROS  & 0.8182 & \textbf{1.0000} & \textbf{0.9912} & \textbf{0.9000} & 0.0092 \\
		& TFLAG   & 0.7143 & 0.7143 & 0.9680 & 0.7143 & 0.0169 \\
		& ORTHRUS & 0.0000 & 0.0000 & 0.9604 & 0.0000 & \textbf{0.0000} \\
		& TGL-APT & \textbf{0.8889} & 0.8889 & \textbf{0.9912} & 0.8889 & 0.0046 \\
		\bottomrule
	\end{tabular}
\end{table}

On Cadets, TGL-APT achieved the highest recall and F1-score while maintaining high precision and a low false positive rate. Compared with KAIROS and TFLAG, its F1-score increased by approximately 21.9 and 15.7 percentage points, respectively. On ClearScope, TGL-APT achieved a recall of 1.0000 and the highest F1-score of 0.9091, providing a better balance between precision and recall. On Theia, TGL-APT achieved the highest precision and remained competitive in F1-score.

ORTHRUS yields very few positive detections on Cadets and none on Theia, resulting in low recall and F1 despite high accuracy; this may partly reflect the difference between its attribution-oriented design \cite{jiang2025orthrus} and the detection-focused evaluation used here.

Overall, TGL-APT demonstrates strong detection performance across the three datasets, particularly in recall and F1-score, while maintaining a low false positive rate. These results indicate that TGL-APT can effectively capture diverse APT attack behaviors across different system environments.

\subsection{Efficiency Analysis}
To evaluate the feasibility and runtime efficiency of TGL-APT in practical deployment, we compare it with KAIROS and ORTHRUS in terms of training time, detection latency, and memory consumption. This experiment aims to answer research question (3), i.e., how TGL-APT performs in computational efficiency while maintaining high detection accuracy. The efficiency evaluation considers training time, median and maximum per-window detection runtime, and peak memory usage on the three datasets. The results are reported in Table \ref{tab5.4}.
\begin{table}[htbp]
	\centering
	\renewcommand{\tablename}{Table}
	\caption{Efficiency Comparison of TGL-APT with Other Methods}
	\label{tab5.4}
	\begin{tabular}{c|ccccc}
		\toprule
		Dataset & Method & Training & Median & Max & Memory \\
		&        & Time & Runtime & Runtime & Usage \\
		&        & (min) & (s) & (s) & (GB) \\
		\midrule
		
		\multirow{3}{*}{Cadets}
		& KAIROS  & 44.8  & 2.5    & 19.7   & 8.2 \\
		& ORTHRUS & \textbf{5.903} & 777.24 & 777.24 & \textbf{3.8400} \\
		& TGL-APT & 27.3  & \textbf{1.6} & \textbf{10.6} & 6.4 \\
		\midrule
		
		\multirow{3}{*}{ClearScope}
		& KAIROS  & 12.6  & 1.8   & 14.2  & 5.4 \\
		& ORTHRUS & \textbf{0.569} & 19.72 & 19.72 & \textbf{0.3600} \\
		& TGL-APT & 7.8   & \textbf{1.2} & \textbf{6.7} & 4.1 \\
		\midrule
		
		\multirow{3}{*}{Theia}
		& KAIROS  & 24.2  & 3.2    & 25.4   & 6.7 \\
		& ORTHRUS & \textbf{2.099} & 426.98 & 426.98 & 11.1611 \\
		& TGL-APT & 14.6  & \textbf{2.0} & \textbf{14.5} & \textbf{5.2} \\
		\bottomrule
		
	\end{tabular}
\end{table}

Thanks to the compressed view construction module, which significantly reduces the size of the original provenance graph, TGL-APT achieved an average reduction in training time of approximately 39\% across the three datasets compared to KAIROS. This is mainly because the compressed view filters redundant entities and edges, reducing the amount of graph information processed during temporal learning. The typical detection latency (median time) of TGL-APT was on average about 33\% lower than that of KAIROS, and its maximum per-window detection time was also consistently lower, indicating better response efficiency during routine and high-load detection. The peak memory usage was reduced by approximately 22\% compared to KAIROS, which is consistent with the smaller graph representation maintained during runtime. Although ORTHRUS requires substantially less training time, low online detection latency is particularly important for continuous APT monitoring. TGL-APT therefore trades higher offline training cost for substantially lower detection runtime and stronger detection performance.

To further evaluate the impact of time window scale on memory consumption, we compare TGL-APT with KAIROS using window sizes from 5 to 30 minutes, as shown in Figure~\ref{fig:scalability_window}. As the time window expands, TGL-APT exhibits more gradual memory growth, whereas KAIROS shows a more pronounced increase under full-graph modeling. These results indicate that TGL-APT scales more favorably with increasing time window size.
\begin{figure}[htbp]
	\centering
	\includegraphics[width=0.95\linewidth]{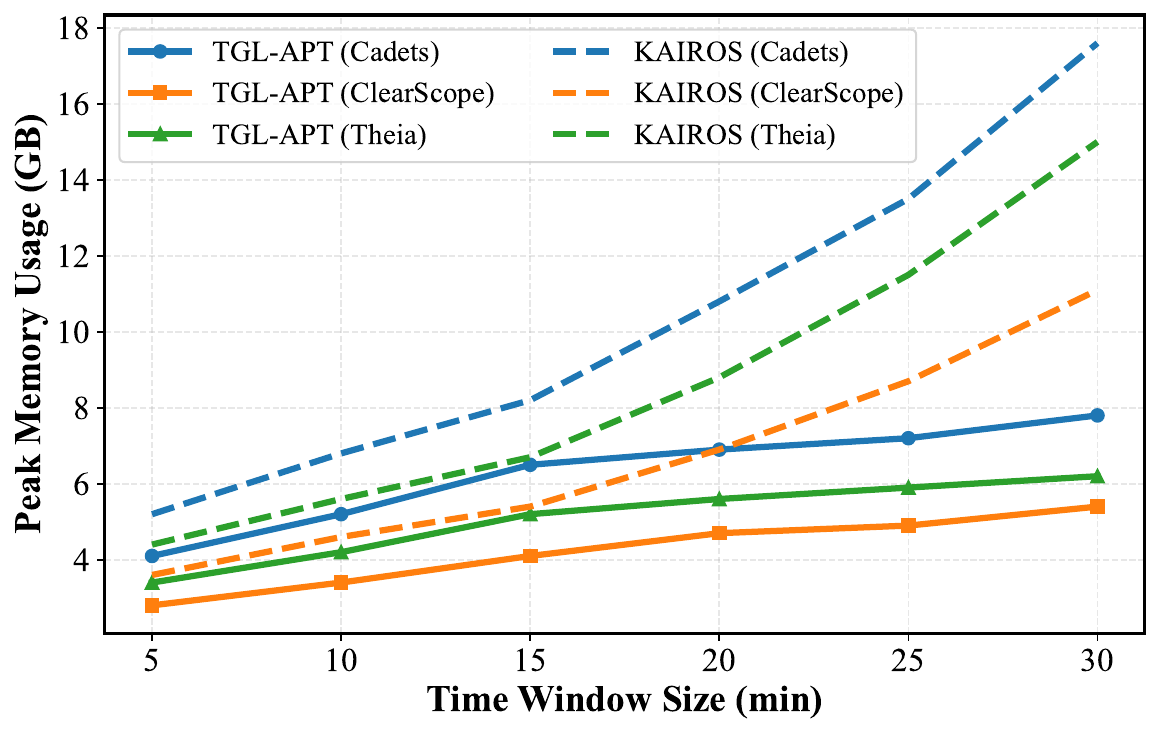}
	\renewcommand{\figurename}{Figure}
	\caption{Peak memory usage with respect to time window scale}
	\label{fig:scalability_window}
\end{figure}

\subsection{Hyperparameter Analysis}
We conduct ablation experiments on four key hyperparameters, including node embedding dimension $|\Phi|$, neighborhood sampling size $|\mathcal{N}|$, edge embedding dimension $|z|$, and time window length $|tw|$, on the Cadets dataset to analyze their impacts on detection performance.

(1) Node Embedding Dimension. The node embedding dimension affects the semantic representation of system entities. As shown in Figure \ref{fig5.3(a)}, performance remains high in a low-dimensional range but degrades gradually when $|\Phi|$ exceeds 64. A moderate dimension is sufficient for effective encoding, while an overlarge dimension introduces sparsity and extra overhead.

(2) Neighborhood Sampling Size. The neighborhood sampling size governs the range of local structural information. The performance is poor at $|\mathcal{N}|=5$, improves significantly at $|\mathcal{N}|=20$, and then saturates. An excessively small size leads to insufficient structural modeling, while an overlarge size brings redundant information without performance gains.

(3) Edge Embedding Dimension. The edge embedding dimension fuses temporal and structural information. As shown in Figure \ref{fig5.3(c)}, the performance increases with dimension and peaks at 200. Excessively high dimensions lead to overfitting and higher computation costs, resulting in performance degradation.

(4) Time Window Length. The time window length controls the temporal granularity. As shown in Figure \ref{fig5.3(d)}, an overly short window fails to capture temporal context, while an overly long window introduces excessive benign interference and reduces recall ( $|tw|=60$). The moderate length $|tw|=15$ achieves the best balance.
\begin{figure}
	\centering
	\subfloat[$|\Phi|$]{
		\label{fig5.3(a)}
		\includegraphics[width=0.47\linewidth]{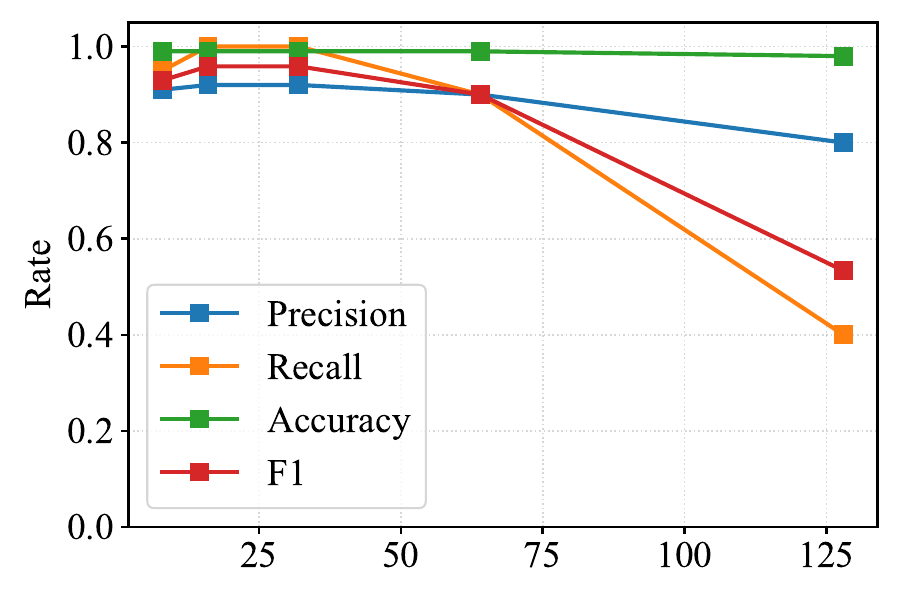}}
	\subfloat[$|\mathcal{N}|$]{
		\label{fig5.3(b)}
		\includegraphics[width=0.47\linewidth]{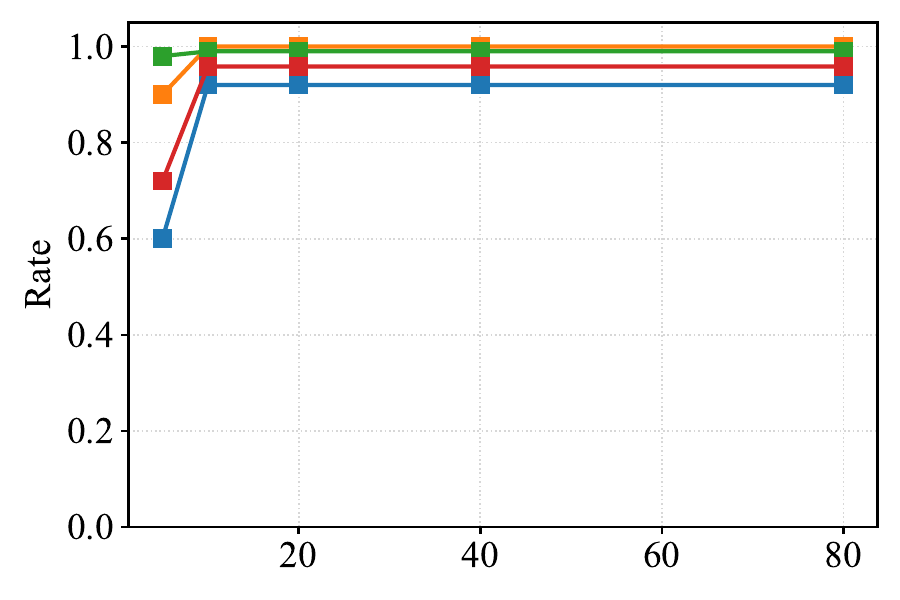}}
	
	\subfloat[$|z|$]{
		\label{fig5.3(c)}
		\includegraphics[width=0.47\linewidth]{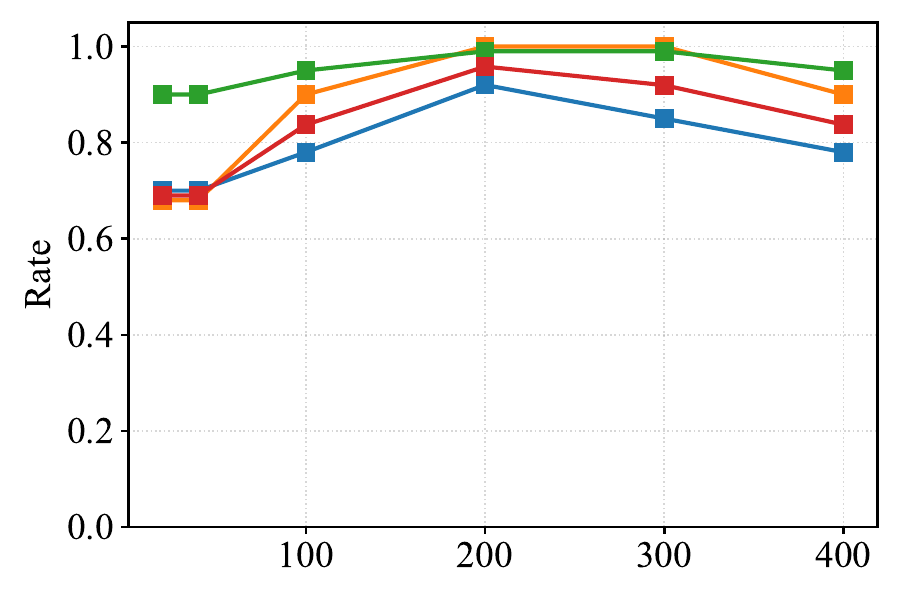}}
	\subfloat[$|tw|$]{
		\label{fig5.3(d)}
		\includegraphics[width=0.47\linewidth]{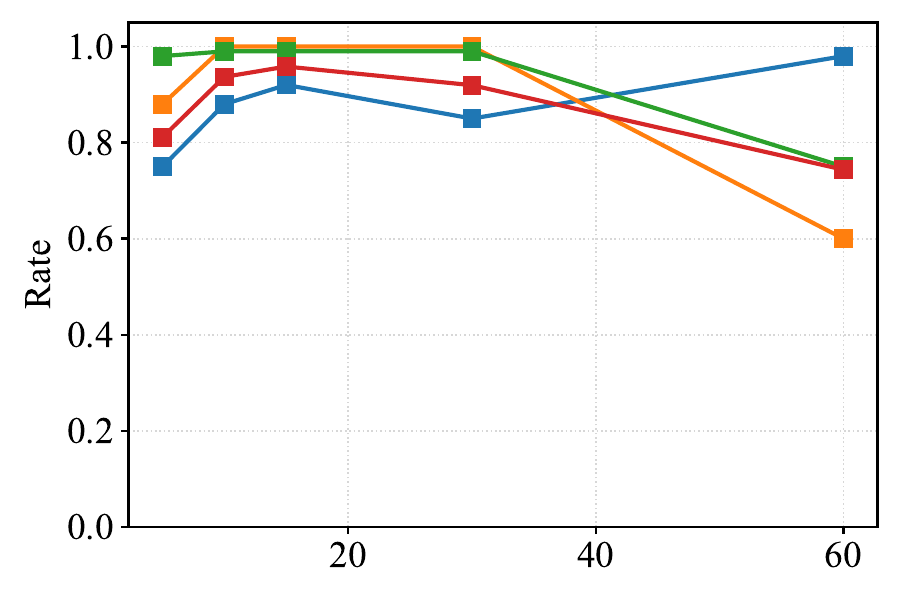}}
	\vspace{8pt}
	\renewcommand{\figurename}{Figure}
	\caption{Impact of different hyperparameters on TGL-APT performance}
	\label{fig5.3}
\end{figure}

\subsection{Ablation Study}
To evaluate the contribution of the key components in TGL-APT, we conduct ablation experiments by removing information-bottleneck graph distillation (w/o IB Distillation), dynamic core node update (w/o DU, Algorithm~\ref{alg5.1}), TF-IDF weighting (w/o TF-IDF), and contrastive learning (w/o Contrastive). These ablations evaluate the effects of provenance distillation, adaptive relevance refinement, and cross-window semantic correlation on detection performance.

The effect of information-bottleneck graph distillation is evaluated on the CADETS dataset. For w/o IB Distillation, core-node filtering and updating are disabled, and temporal learning is performed on all edges of the preprocessed temporal graph. As shown in Table~\ref{tab:ib_ablation}, removing the distillation module causes the F1-score to decrease from 0.9565 to 0.5000, while recall drops from 1.0000 to 0.3636. In particular, the number of true positives decreases from 11 to 4 and false negatives increase from 0 to 7, whereas the number of false positives remains unchanged. This result suggests that directly learning from the undistilled provenance graph makes sparse attack-related activities more difficult to distinguish from large volumes of routine system behavior. By prioritizing structurally and behaviorally informative entities while preserving their causal context, graph distillation helps retain attack-relevant evidence for subsequent temporal learning.
\begin{table}[t]
	\centering
	\caption{Ablation Results of Information-Bottleneck Graph Distillation on CADETS}
	\label{tab:ib_ablation}
	\setlength{\tabcolsep}{2.5pt}
	\begin{tabular}{l|cccc|cccc}
		\toprule
		Variant & TP & TN & FP & FN
		& Precision & Recall & Accuracy & F1 \\
		\midrule
		TGL-APT
		& 11 & 328 & 1 & 0
		& 0.9167 & 1.0000 & 0.9971 & 0.9565 \\
		
		w/o IB Distillation
		& 4 & 328 & 1 & 7
		& 0.8000 & 0.3636 & 0.9765 & 0.5000 \\
		\bottomrule
	\end{tabular}
\end{table}

The dynamic core node update is evaluated across all three datasets. As shown in Figure~\ref{fig5.4a}, removing this mechanism (TGL-APT w/o DU) reduces the F1-score by an average of about 4.5\%. Precision decreases noticeably, while recall also declines slightly on Cadets. This indicates that keeping the initially selected core node set fixed becomes less effective as system behavior evolves. The dynamic update incorporates nodes with high attention weights and large embedding deviations while removing low-contribution nodes, allowing the model to maintain its focus on informative entities over time.

Figure \ref{fig5.4b} shows the comparison results of different correlation strategies. The strategy based solely on entity co-occurrence (w/o TF-IDF) resulted in a sharp drop in recall on the Cadets dataset to 0.64, with the F1-score reduced by 22.0\% compared to the full model. This indicates that related suspicious activities in this scenario frequently involve different entities across time windows, resulting in limited direct overlap of anomalous nodes; entity co-occurrence alone is therefore insufficient to associate fragmented suspicious activities. On ClearScope, although w/o TF-IDF maintained a recall of 1.000, its precision was lower, suggesting that entity co-occurrence introduced a significant number of false associations by incorrectly linking unrelated windows. On the Theia dataset, the performance of w/o TF-IDF was comparable to the full model, presumably because suspicious activities in this dataset exhibited stronger entity co-occurrence, making co-occurrence itself sufficient for window correlation.
\begin{figure}
	\centering
	\subfloat[Dynamic Core Node Update]{
		\label{fig5.4a}
		\includegraphics[width=0.47\linewidth]{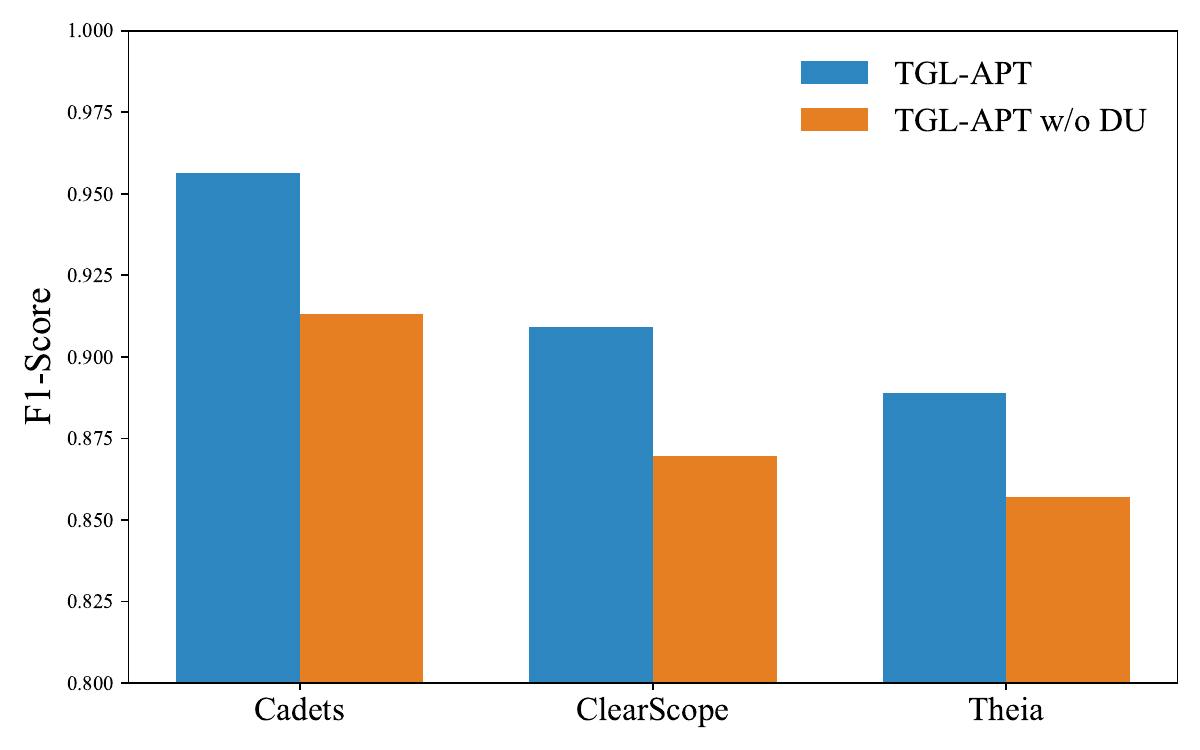}}
	\hfill
	\subfloat[TF-IDF and Contrastive Learning]{
		\label{fig5.4b}
		\includegraphics[width=0.47\linewidth]{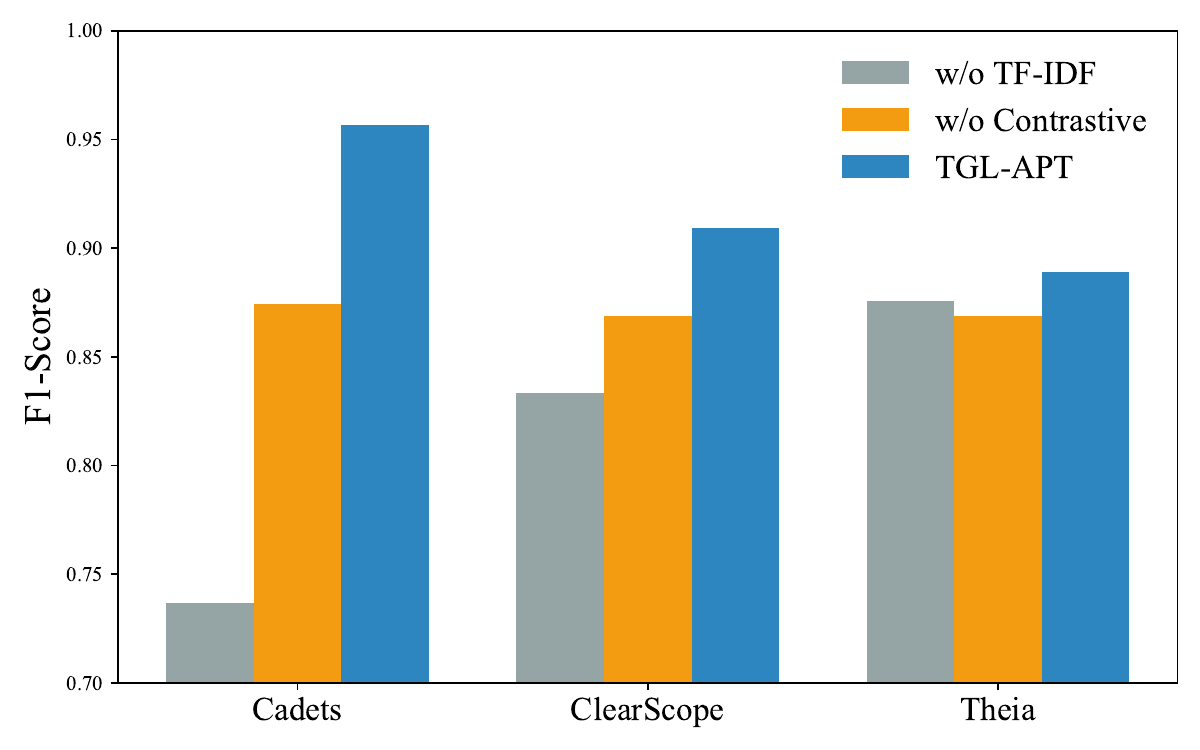}}
	\vspace{8pt}
	\renewcommand{\figurename}{Figure}
	\caption{Ablation study results of key modules}
	\label{fig5.4}
\end{figure}

After retaining TF-IDF weighting but removing contrastive learning (w/o Contrastive), the model's F1-score dropped by 8.6\% and 4.4\% on Cadets and ClearScope, respectively, and by 3.5\% on Theia. This shows that although TF-IDF provides semantic vector representations for suspicious windows, directly computing cosine similarity in the high-dimensional sparse space is less effective for capturing semantic relationships between them. Contrastive learning improves the semantic representation of suspicious windows in a low-dimensional embedding space, showing greater benefits in scenarios with diverse behavioral patterns and frequent entity changes.

\subsection{Attack Investigation Evaluation}
To evaluate the stage categorization component of attack investigation, we conduct a quantitative evaluation on the DARPA E3 dataset. Following the four-stage organization described in Section~\ref{sec:methodology}, the reconstructed attack chain is divided into four stages. Using manually annotated stage labels as ground truth, we compare the stage assignment of each system event (edge) with the algorithm output and report Precision, Recall, and F1-score for each stage, as shown in Figure \ref{fig5.5}. The following case study further illustrates the reconstructed attack chain and its stage organization.
\begin{figure}[hbtp]
	\centering
	\includegraphics[width=0.95\linewidth]{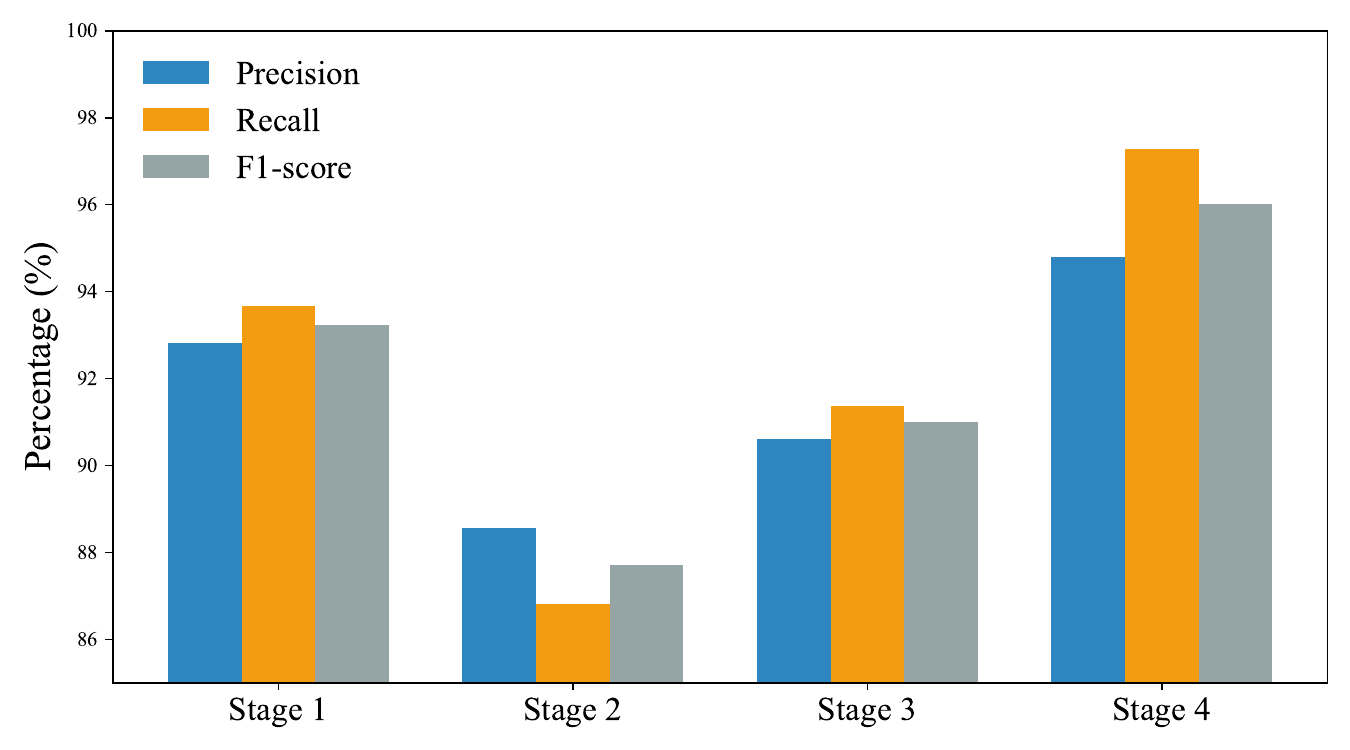}
	\renewcommand{\figurename}{Figure}
	\caption{Evaluation results of attack investigation phase categorization}
	\label{fig5.5}
\end{figure}

As seen in Figure \ref{fig5.5}, the phase categorization algorithm achieved high F1-scores across all four stages, with Stage 4 achieving the highest F1-score, with Precision and Recall reaching 94.79\% and 97.28\%, respectively. This indicates that the attacker's behavior during the final objective achievement and trace cleanup stage is relatively concentrated and feature-rich, allowing the algorithm to identify it accurately. Stage 2 had a relatively lower F1-score, mainly due to a slightly lower recall. Analysis suggests that this stage contains diverse tactical behaviors such as persistence, privilege escalation, and information gathering, which are often mixed with normal system operations, increasing the difficulty of categorization. Nevertheless, its F1-score remained above 87\%, showing that the SCN-based categorization can distinguish the four attack stages under relatively complex behavioral patterns.
\begin{figure*}[t]
	\centering
	\includegraphics[width=1\linewidth]{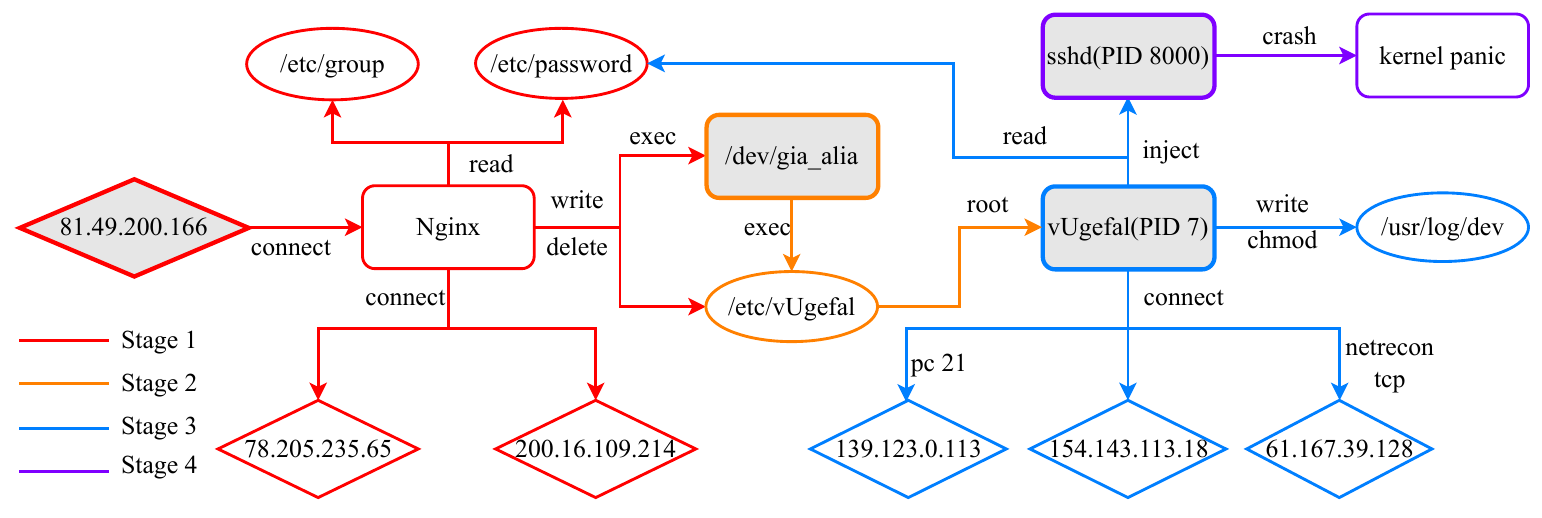}
	\renewcommand{\figurename}{Figure}
	\caption{Example of attack investigation phase categorization on Cadets dataset}
	\label{fig5.6}
\end{figure*}

To complement the quantitative stage-level evaluation, Figure~\ref{fig5.6} presents a qualitative investigation case study on the Cadets dataset, illustrating the reconstructed causal backbone and its subsequent stage categorization. As shown, four stage characterization nodes divide the attack behavior into four stages, namely Stage 1 Initial Intrusion, Stage 2 Residence and Exploration, Stage 3 Lateral Movement, and Stage 4 Objective Achievement and Cover-up. From the IB perspective, these nodes are important not merely because of their local connectivity, but because they mediate transitions between causally distinct attack behaviors. The execution of \texttt{/etc/vUgefal} exhibits high behavioral specificity $R(v)$ because this rare interaction deviates from repetitive benign activities and marks the transition toward persistent malicious behavior. In contrast, the elevated \texttt{sshd} process exhibits strong structural mediation $C(v)$ by bridging local host compromise with multiple external network interactions. These roles are consistent with the causal-bridging and temporal-transition semantics of IB nodes, explaining why these SCNs become representative boundaries between attack stages. Ultimately, the attack triggered a kernel crash to achieve destructive objectives, accompanied by cover-up actions such as log tampering and file deletion. This case study illustrates how TGL-APT organizes reconstructed suspicious activities into an interpretable attack process, providing analysts with a concise view of attack evolution.

\section{Conclusion}
\label{sec:conclusion}

This paper presents TGL-APT, an efficient and adaptive framework for provenance-based APT detection and investigation. TGL-APT combines information-bottleneck-guided graph distillation, adaptive temporal learning, cross-spatiotemporal fingerprint alignment, and causal reconstruction to reduce provenance redundancy and organize fragmented suspicious activities into interpretable attack processes. Experiments on the DARPA E3 datasets demonstrate favorable detection accuracy, computational efficiency, and stage-level investigation capability. Since IB nodes are characterized by information relevance, causal mediation, and temporal behavioral change rather than specific entities or attack patterns, the abstraction also supports generalization across heterogeneous attacks.

\textbf{Limitations and future work.}
Attack-chain reconstruction is currently evaluated through stage-level metrics and case studies, while the adaptive focus mechanism uses a fixed update schedule. Future work will quantitatively evaluate causal-path reconstruction against ground-truth attack graphs and explore adaptive online updates, weakly supervised IB refinement, and structured investigation reports.

% conference papers do not normally have an appendix

% use section* for acknowledgment
% \section*{Acknowledgment}

% The authors would like to thank...

% trigger a \newpage just before the given reference
% number - used to balance the columns on the last page
% adjust value as needed - may need to be readjusted if
% the document is modified later
%\IEEEtriggeratref{8}
% The "triggered" command can be changed if desired:
%\IEEEtriggercmd{\enlargethispage{-5in}}

% references section

% can use a bibliography generated by BibTeX as a .bbl file
% BibTeX documentation can be easily obtained at:
% http://mirror.ctan.org/biblio/bibtex/contrib/doc/
% The IEEEtran BibTeX style support page is at:
% http://www.michaelshell.org/tex/ieeetran/bibtex/
%\bibliographystyle{IEEEtran}
% argument is your BibTeX string definitions and bibliography database(s)
%\bibliography{IEEEabrv,../bib/paper}
%
% <OR> manually copy in the resultant .bbl file
% set second argument of \begin to the number of references
% (used to reserve space for the reference number labels box)
\bibliographystyle{IEEEtran}
\bibliography{reference}

% that's all folks
\end{document}